\documentclass[12pt]{article} 

\def\UseBibLatex{1}

\makeatletter
\def\input@path{{styles/}}
\makeatother

\providecommand{\BibLatexMode}[1]{}
\providecommand{\BibTexMode}[1]{}

\renewcommand{\BibLatexMode}[1]{#1}
\renewcommand{\BibTexMode}[1]{}

\ifx\UseBibLatex\undefined%
  \renewcommand{\BibLatexMode}[1]{}
  \renewcommand{\BibTexMode}[1]{#1}
\fi

\BibLatexMode{%
   \usepackage[bibencoding=utf8,style=alphabetic,backend=biber]{biblatex}%
   \usepackage{sariel_biblatex}%
}

\usepackage[cm]{fullpage}%
\usepackage{amsmath}%
\usepackage{amssymb}%
\usepackage[table]{xcolor}%

\usepackage[amsmath,thmmarks]{ntheorem}%

\usepackage{titlesec}%
\usepackage{xcolor}%
\usepackage{mleftright}%
\usepackage{xspace}%
\usepackage{graphicx}
\usepackage{hyperref}%
\usepackage[inline]{enumitem}
\usepackage{hyperref}%
\usepackage[ocgcolorlinks]{ocgx2}

\hypersetup{%
      unicode,
      breaklinks,%
      colorlinks=true,%
      urlcolor=[rgb]{0.25,0.0,0.0},%
      linkcolor=[rgb]{0.5,0.0,0.0},%
      citecolor=[rgb]{0,0.2,0.445},%
      filecolor=[rgb]{0,0,0.4},
      anchorcolor=[rgb]={0.0,0.1,0.2}%
}

\titlelabel{\thetitle. }%

\theoremseparator{.}%

\theoremstyle{plain}%
\newtheorem{theorem}{Theorem}[section]

\newtheorem{lemma}[theorem]{Lemma}

\theoremstyle{plain}%
\theoremheaderfont{\sf} \theorembodyfont{\upshape}%
\newtheorem*{remark:unnumbered}[theorem]{Remark}%

\theoremheaderfont{\em}%
\theorembodyfont{\upshape}%
\theoremstyle{nonumberplain}%
\theoremseparator{}%
\theoremsymbol{\myqedsymbol}%
\newtheorem{proof}{Proof:}%

\providecommand{\emphind}[1]{}%
\renewcommand{\emphind}[1]{\emph{#1}\index{#1}}

\definecolor{blue25emph}{rgb}{0, 0, 11}

\providecommand{\emphic}[2]{}
\renewcommand{\emphic}[2]{\textcolor{blue25emph}{%
      \textbf{\emph{#1}}}\index{#2}}

\providecommand{\emphi}[1]{}%
\renewcommand{\emphi}[1]{\emphic{#1}{#1}}

\definecolor{almostblack}{rgb}{0, 0, 0.3}

\providecommand{\emphw}[1]{}%
\renewcommand{\emphw}[1]{{\textcolor{almostblack}{\emph{#1}}}}%

\providecommand{\emphOnly}[1]{}%
\renewcommand{\emphOnly}[1]{\emph{\textcolor{blue25emph}{\textbf{#1}}}}

\newcommand{\myqedsymbol}{\rule{2mm}{2mm}}
\newcommand{\SarielThanks}[1]{%
   \thanks{%
      School of Computing and Data Science; %
      University of Illinois; %
      201 N. Goodwin Avenue; %
      Urbana, IL, 61801, USA; %
      \href{mailto:spam@illinois.edu}{sariel@illinois.edu}; %
      \url{http://sarielhp.org/}.%
   #1%
   }%
}

\newcommand{\HLink}[2]{\hyperref[#2]{#1~\ref*{#2}}}
\newcommand{\HLinkSuffix}[3]{\hyperref[#2]{#1\ref*{#2}{#3}}}

\newcommand{\lemlab}[1]{\label{lemma:#1}}
\newcommand{\lemref}[1]{\HLink{Lemma}{lemma:#1}}%

\newcommand{\defrefY}[2]{\hyperref[def:#1]{#2}}

\providecommand{\eqlab}[1]{}%
\renewcommand{\eqlab}[1]{\label{equation:#1}}

\providecommand{\remove}[1]{}%
\newcommand{\Set}[2]{\left\{ #1 \;\middle\vert\; #2 \right\}}

\newcommand{\pth}[1]{\mleft(#1\mright)}%
\newcommand{\bpth}[1]{\mleft[#1\mright]}%

\newcommand{\ExC}{{\mathbb{E}}}

\renewcommand{\Re}{\mathbb{R}}%

\newlist{compactenumA}{enumerate}{5}%
\setlist[compactenumA]{itemsep=-0.5ex,topsep=0.5ex,partopsep=1ex,parsep=1ex,%
   label=(\Alph*)}%

\newlist{compactenuma}{enumerate}{5}%
\setlist[compactenuma]{itemsep=-0.5ex,topsep=0.5ex,partopsep=1ex,parsep=1ex,%
   label=(\alph*)}%

\newlist{compactenumI}{enumerate}{5}%
\setlist[compactenumI]{itemsep=-0.5ex,topsep=0.5ex,partopsep=1ex,parsep=1ex,%
   label=(\Roman*)}%

\newlist{compactenumi}{enumerate}{5}%
\setlist[compactenumi]{itemsep=-0.5ex,topsep=0.5ex,partopsep=1ex,parsep=1ex,%
   label=(\roman*)}%

\newlist{compactitem}{itemize}{5}%
\setlist[compactitem]{itemsep=-0.5ex,topsep=0.5ex,partopsep=1ex,parsep=1ex,%
   label=\ensuremath{\bullet}}%

\newcommand{\etal}{\textit{et~al.}\xspace}

\numberwithin{figure}{section}%
\numberwithin{table}{section}%
\numberwithin{equation}{section}%

\usepackage{mathcalb}

\newcommand{\ballC}{\mathcalb{b}}%
\newcommand{\ballY}[2]{\ballC\pth{#1,#2}}%
\newcommand{\sphereC}{\mathcalb{s}}%
\newcommand{\sphereY}[2]{\sphereC\pth{#1,#2}}%
\newcommand{\lenX}[1]{\left\| #1 \right\|}%

\BibLatexMode{%
   \bibliography{c_packed_sep} }

\begin{document}

\title{Separator for $c$-Packed Segments and Curves}

\author{Sariel Har-Peled\SarielThanks{}}

\date{\today}

\maketitle

\begin{abstract}
    We provide a simple algorithm for computing a balanced separator for a set of segments that is $c$-packed, showing that the separator cuts only $O(c)$ segments.  While the result was known before, arguably our new proof is simpler.
\end{abstract}

\section{Separator for $c$-packed segments}

As a reminder \cite{dhw-afdrc-12}, a set $S$ of segments in $\Re^d$ is \emphi{$c$-packed}, if for all $p \in\Re^d$ and $r > 0$, we have $\lenX{ S \sqcap \ballY{p}{r}} \leq c r$, where $\ballY{p}{r}$ is the ball of radius $r$ centered at $p$, $\lenX{ S \sqcap \ballY{p}{r}} = \sum_{ s \in S} \lenX{ s \cap \ballY{p}{r}}$, and $\lenX{s}$ is the length of $s$. Let $\#(p,r,S) = | \Set{ \sphereY{p}{r} \cap s}{s \in S} |$ denote the number of intersection points of the segments of $S$ with the sphere $\sphereY{p}{r}$ that bounds $\ballY{p}{r}$. Observe that
\begin{equation*}
    \lenX{S \sqcap \ballY{p}{r} }
    \geq
    \textstyle
    \int_{x=0}^r \#(p,x,S) \, dr .
\end{equation*}
As such, we get the following.
\begin{lemma}
    \lemlab{silly}%
    Let $S$ be a set of $c$-packed segments in $\Re^d$, then for any $p$, and $0 \leq a < b$, we have $\ExC_{x \in [a,b]}\bpth{\#(p,x,S)} \leq \frac{cb}{b-a}$.
\end{lemma}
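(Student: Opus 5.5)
The plan is to combine the displayed integral inequality (the coarea-type bound preceding the lemma) with the $c$-packedness of $S$ applied to the ball $\ballY{p}{b}$. The expectation here is over $x$ chosen uniformly from $[a,b]$, so
\begin{equation*}
    \ExC_{x \in [a,b]}\bpth{\#(p,x,S)}
    =
    \frac{1}{b-a} \int_{x=a}^{b} \#(p,x,S) \, dx .
\end{equation*}
Since $\#(p,x,S) \geq 0$ for every $x$, we can enlarge the range of integration to $[0,b]$:
\begin{equation*}
    \int_{a}^{b} \#(p,x,S) \, dx \leq \int_{0}^{b} \#(p,x,S)\, dx .
\end{equation*}

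Next, apply the observation stated just before the lemma with $r = b$. This gives
\begin{equation*}
    \int_0^b \#(p,x,S)\, dx \leq \lenX{S \sqcap \ballY{p}{b}} .
\end{equation*}
By $c$-packedness, $\lenX{S \sqcap \ballY{p}{b}} \leq cb$. Chaining the three displays and dividing by $b-a>0$ yields $\ExC_{x\in[a,b]}\bpth{\#(p,x,S)} \leq \frac{cb}{b-a}$, as claimed.

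The only step requiring care is the observation itself, which I take as given from the text. If one wanted to justify it, argue segment by segment. For a single segment $s$, the map $q \mapsto \lenX{q-p}$ is $1$-Lipschitz along $s$. Hence the length of $s \cap \ballY{p}{r}$ is at least the integral over $x\in[0,r]$ of the number of points of $s$ at distance exactly $x$ from $p$; this is the one-dimensional coarea/area formula for a Lipschitz function. Summing over $s \in S$ gives the observation. Measurability of $x \mapsto \#(p,x,S)$ is harmless: each segment meets a sphere in at most two points (or lies on it, which happens only for a single $x$, a measure-zero set of radii), so the count is piecewise constant.
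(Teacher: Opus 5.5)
Your proof is correct and follows the paper's approach. The paper states the lemma as an immediate consequence of the displayed integral bound and $c$-packedness, which is exactly your chain: extend the integral from $[a,b]$ to $[0,b]$, bound it by $\lenX{S \sqcap \ballY{p}{b}} \leq cb$, and divide by $b-a$.
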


The following separator result is from Deryckere \etal \cite{dgrsw-wsstc-25}, but our
proof is arguably simpler.

\begin{lemma}
    For a set $S$ of $n$ segments in $\Re^d$, one can compute, in expected linear time, a sphere $\sphereC= \sphereY{p}{x}$, such that the number of segments of $S$ intersecting $\sphereC$ is at most $O(c)$, and the number of of segments inside (resp. outside) $\sphereC$ is at least $\tfrac{n}{2c}$, where $c$ is a constant that depends only on the dimension $d$.
\end{lemma}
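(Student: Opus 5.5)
The plan is to reduce each segment to a single representative point, find a ball that captures a constant fraction of these points while a ball twice as large captures at most half of them, and then pick a random radius in between. Lemma~\lemref{silly} will bound the expected number of segments the random sphere crosses. Throughout, $c$ denotes the packing constant of $S$, and I write $c_d$ for the constant depending only on $d$ in the statement. If $n$ is small compared to $c$, say $n = O(c\, c_d)$, the statement is trivial, so I assume $n$ is large.

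\emph{Step 1: choosing the ball.} Let $P$ be the set of midpoints of the segments of $S$, and let $k = \lceil n/c_d \rceil$. Compute a $2$-approximation to the smallest ball containing $k$ points of $P$; known algorithms do this in expected linear time. Call the result $\ballY{p}{r}$, so $r \leq 2 r_{opt}$. By minimality, every ball of radius less than $r_{opt}$, and in particular every ball of radius $r/4$, contains fewer than $k$ points of $P$. The ball $\ballY{p}{2r}$ can be covered by $\lambda_d$ balls of radius $r/4$, where $\lambda_d$ depends only on $d$. Hence $|P \cap \ballY{p}{2r}| < \lambda_d k$. Choosing $c_d \geq 2\lambda_d$ (up to rounding) gives at most $n/2$ points of $P$ in $\ballY{p}{2r}$, while $\ballY{p}{r}$ contains at least $k \geq n/c_d$ of them.

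\emph{Step 2: the random sphere.} Pick $x$ uniformly at random in $[r,2r]$. By Lemma~\lemref{silly} with $a=r$ and $b=2r$, we get $\ExC[\#(p,x,S)] \leq 2c$. Each segment meets a sphere in at most two points, so the number of segments crossing $\sphereY{p}{x}$ is at most $\#(p,x,S)$. By Markov's inequality, with probability at least $1/2$ at most $4c$ segments cross the sphere. Checking a candidate sphere takes $O(n)$ time, so repeating until success costs expected $O(n)$ overall.

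\emph{Step 3: balance.} A segment whose midpoint lies in $\ballY{p}{r} \subseteq \ballY{p}{x}$ and that does not cross the sphere lies entirely inside. Therefore at least $k - 4c \geq n/(2c_d)$ segments are inside, for $n$ large. Similarly, a segment whose midpoint lies outside $\ballY{p}{2r}$ and that does not cross the sphere lies entirely outside. There are at least $n/2 - 4c$ such segments, which is at least $n/(2c_d)$ for large $n$.

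The main obstacle is the algorithmic part of Step~1, namely computing an approximate smallest $k$-enclosing ball in expected linear time. I would invoke the known result for this, and absorb its approximation factor into the covering constant $\lambda_d$. If only a constant-factor approximation with a larger factor is available, I would shrink the covering radius accordingly, which changes only $c_d$.
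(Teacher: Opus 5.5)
Your proof is correct and is essentially the paper's argument: an approximate smallest ball capturing a constant fraction of representative points, a covering argument showing $\ballY{p}{2r}$ misses a constant fraction of them, and a random radius $x \in [r,2r]$ analyzed with \lemref{silly} and Markov's inequality. The differences are cosmetic. You use midpoints rather than endpoints, and you cover by balls of radius $r/4 < r_{\mathrm{opt}}$, which handles the $2$-approximation and the strict-minimality step more carefully than the paper does (a ball of radius exactly $r$ need not contain fewer than the target number of points). One small correction: for $n = O(c\,c_d)$ the statement is not trivial but can actually fail (for $n=1$ there cannot be a segment on each side), so, as the paper does with $n \geq 8c^2$, you should simply assume $n$ is large.
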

\begin{proof}
    This is by now a standard argument \cite{h-speps-13}. Let $V = V(S)$ be the set of all endpoints of the segments of $S$. Let $c= c'+1$, where $c' = 2^{O(d)}$ be the doubling constant of $\Re^d$ -- that is, the minimal number of balls of radius $1/2$ needed to cover a ball of radius $1$.  Let $\ballC = \ballY{p}{r}$ be the smallest ball containing $2n/c$ of the endpoints of the segments of $S$. For any sphere $\sphereY{p}{t}$, with $t \in [r,2r]$, that at least $2n/c$ endpoints of $S$ are inside (resp. outside) it, as $\ballY{p}{2r}$ can be covered by $c'$ balls of radius $r$, and each one of them contains at most $2n/c$ endpoints of $S$ -- namely, $\ballY{p}{2r}$ contains at most $(c'/c)2n$ endpoints of $S$.

    Pick a random radius $x \in [r,2r]$, and consider the sphere $\sphereC = \sphereY{p}{x}$. By \lemref{silly}, the expected number of intersections of $\sphereC$ with $S$ is
    \begin{math}
        \leq \frac{2cr}{r} \leq 2c.
    \end{math}
    By Markov's inequality, the probability that a random $x$ would have more than $4c$ intersections is at most $1/2$. If this happens, the algorithm repeat the sampling till success.. Otherwise, the algorithm is done as $\sphereC$ intersects at most $4c$ segments of $S$ and provides the desired separation, as at least $(2n/c - 4c)/2 \geq \tfrac{n}{2c}$ segments are on each side of it, for $n \geq 8c^2$.

    As for an efficient algorithm, repeat the above construction with $c=(c')^2 + 1$, where the ball used is a $2$-approximation to the smallest ball containing $2n/c$ points of $V(S)$, computed in linear time \cite{hr-nplta-15}. It is easy to argue that this ball has the desired properties.
\end{proof}

\BibLatexMode{\printbibliography}

\end{document}